\definecolor{linkcolor}{HTML}{e88d67} 
\definecolor{citecolor}{HTML}{e88d67} 
\definecolor{urlcolor}{HTML}{e88d67} 
\theoremstyle{plain}
\newtheorem{thm}{Theorem}[]
\newtheorem{lem}{Lemma}[]
\theoremstyle{definition}
\newtheorem{exmp}{Example}[]
\theoremstyle{remark}
\newtheorem{rem}{Remark}
\DeclareMathOperator{\const}{const}
\newcommand{\brackets}[1]{\left( #1 \right)}
\newcommand{\PoissonBrackets}[1]{\left\{ #1 \right\}}
\newcommand{\LieBrackets}[1]{\left[ #1 \right]}
\newcommand{\PIIn}[1]{\text{P}_\text{II}^{\brackets{ #1 }}}
\newcommand{\dsum}{\displaystyle \sum}
\newcommand{\Painleve}{Painlev{\'e} }
\newcommand{\Backlund}{B{\"a}cklund }
\newcommand{\undertext}[2]{\underset{ #1 }{\underbrace{ #2 }}}
\def\d{\hbox{\rm d}}
\def\od#1{\frac{\d}{\d#1}}
\def\hide#1{}
\newcommand{\p}{\partial}
\def\p{\partial}
\def\Pt{{\rm P}_{\rm{\scriptstyle II}}}
\def\Ptn{{\rm P}^{(n)}_{\rm{\scriptstyle II}}}
\def\Lr{\mathcal{L}}
\begin{document}

\title[Sigma Form for the PII Hierarchy]
{The Sigma Form of the Second Painlev\'e Hierarchy}

\author{Irina Bobrova}
\noindent\address{\noindent Faculty of Mathematics, National Research Institute "Higher School of Economics", Moscow, Russia}
\email{ia.bobrova94@gmail.com}

\author{Marta Mazzocco}
\noindent\address{\noindent School of Mathematics, University of Birmingham, Birmingham, UK}
\email{m.mazzocco@bham.ac.uk}

\subjclass{Primary 34M55. Secondary 37K20, 35Q53} 
\keywords{Painlev\'e equations, sigma -- coordinates, sigma forms}

\maketitle

\begin{abstract}
In this paper we study the so-called sigma form of the  second Painlev\'e hierarchy. To obtain this form, we use some properties of the Hamiltonian structure of the  second Painlev\'e hierarchy and of the Lenard operator.
\end{abstract}

\tableofcontents

\section{Introduction}

The Painlev\'e differential equations were discovered more than a hundred years ago and since the eighties have appeared in many branches on mathematics and physics, including several of Dubrovin's seminal papers on Frobenius manifolds. 

The reason behind the ubiquitous appearance of these equations is that they are innately linked to the Toda hierarchy. In \cite{DZ1}, Dubrovin and Zhang proved that the tau function of a generic solution to the extended Toda hierarchy is annihilated by some combinations of the Virasoro operators. It is such Virasoro constraints that regulate the correlation functions of many systems in random matrix theory, in string theory and topological field theory.  For example in \cite{DZ}, expressions for the genus $g\geq 1$ total Gromov--Witten potential were obtained via the genus zero quantities derived from the Virasoro constraints. 

The link between Toda--type systems and Painlev\'e equations becomes explicit when the latter are re-formulated in the so called {\it sigma form}  introduced in \cite{JMMS} as the equation satisfied by the logarithmic derivative of the isomonodromic $\tau$--function. An other approach to obtain the sigma form was proposed by Okamoto who developed the Hamiltonian theory of the Painlev\'e differential equations and showed that  all  B\"acklund transformations can be obtained as natural affine Weyl groups actions on the sigma form (\cite{okam1}, \cite{okam2}, \cite{okam3}, \cite{okam4}).

In this paper, we present the sigma form for the second Painlev\'e hierarchy, an infinite sequence of non linear ODEs containing
$$
\Pt :\quad\quad w_{zz} = 2 w^3+z\,w+\alpha_1,
$$
as its simplest equation.The
 $n$-th element  is of order $2n$, and depends on $n$ parameters denoted by ${t}_1,\dots,{t}_{n-1}$ and $\alpha_n$:
\begin{equation}\label{PIIhier}
    \Ptn :\,
    \left(
    \od{z}
    + 2 w
    \right) 
    \Lr_{n} \left[w_{z}-w^{2}\right]
    + \sum_{l=1}^{n-1} {t}_{l} \left( \od{z}
    + 2 w \right) \Lr_{l}\left[w_{z}-w^{2}\right]
    = z w 
    + \alpha_{n},
    \, n\ge 1,
\end{equation}
where $\Lr_{n}$ is the
operator defined by the recursion relation
\begin{equation}\label{i:lenard}
    \od{z}
    {\mathcal L}_{n+1}
    = \left(
    \frac{{\rm d^3}}{{\rm d}z^3}
    + 4 \left( w_z 
    - w^2 \right)
    \od{z}
    + 2 \left( w_z 
    - w^2 \right)_z
    \right){{\mathcal L}}_n,
    \quad 
    {\mathcal L}_0 
    \left[w_z - w^2 \right] 
    = {\textstyle\frac12},
\end{equation}
with boundary condition
\begin{equation}\label{eq:L0}
    \mathcal L_{n}[0]
    :=0,
    \quad 
    \forall n \geq 1.
\end{equation}

The Hamiltonian form of the second Painlev\'e hierarchy was produced in \cite{MM} where the authors gave canonical coordinates $P_1, \dots, P_n, Q_1, \dots, Q_n$ and a Hamiltonian function ${\mathcal H}^{(n)}$ such that $\Ptn$ is equivalent to
\begin{equation}\label{i:hpq}
    \frac{\partial Q_i}{\partial z}
     =\frac{\partial {\mathcal H}^{(n)}}{\partial P_i},
    \qquad
    \frac{\partial P_i}{\partial z}
    = - \frac{\partial {\mathcal H}^{(n)}}{\partial Q_i},
    \quad 
    i = 1, \dots, n.
\end{equation}
In particular $ {\mathcal H}^{(n)}$ is a polynomial in
$P_1,\dots,P_n,Q_1,\dots,Q_n$ and that the Hamiltonian equations satisfy the Painlev\'e property.

The sigma function is by definition the evaluation of the Hamiltonian on solutions, namely
\begin{equation}\label{eq-s-def}
    \sigma_n(z)
    := {\mathcal H}^{(n)}
    \left( 
    P_1(z), \dots, P_n(z), Q_1(z), \dots, Q_n(z) 
    \right).
\end{equation}
Our main result in this paper is the following 

\begin{thm} \label{SFthm}
Consider the Lenard operators $\widehat{\mathcal L}_k$ defined by
\begin{align}\label{i:lenards}
    \od{z}
    \widehat{\mathcal L}_{k+1} \left[\sigma_n'-\frac{t_{n-1}}{2}\right]
    &= \left(
    \frac{{\rm d^3}}{{\rm d}z^3}
    + 2 \left(2\sigma_n'- t_{n-1}\right)
    \od{z}
    + 2 \sigma_n'' 
    \right)
    \widehat{\mathcal L}_k\left[\sigma_n'-\frac{t_{n-1}}{2}\right],
    \\
    \notag 
    \widehat{\mathcal L}_0 \left[\sigma_n'-\frac{t_{n-1}}{2}\right]
    &= {\textstyle\frac12},
    \quad 
    t_0 = - z,
\end{align}
with the boundary condition
\begin{equation}\label{eq:L0s}
    \widehat{\mathcal L}_{k}[0]
    := \left(
    - \frac{ t_{n-1}}{2}
    \right)^k
    \frac{1}{k}
    \left(
    \begin{array}{c} 2k\\ k\\ \end{array}
    \right),
\end{equation}
and define
\begin{equation}\label{eq:def-fn}
    f_n 
    = \dsum_{l = 1}^{n} t_l \widehat{\mathcal{L}}_l \LieBrackets{ \sigma_n^{\prime} \brackets{z} - \dfrac{t_{n - 1}}{2}}, 
    \quad 
    t_n = 1.
\end{equation}
Then, for $ n > 1$, the $n$-th element of the second Painlev\'e hierarchy \eqref{PIIhier} is equivalent to 
\begin{equation}\label{SF1}
    - f_n^{\prime} 
    + \brackets{f_n^{\prime}}^2 
    + \brackets{z - 2 f_n}
    \brackets{
    \sum_{l=1}^n t_l 
    \left( \widehat{\mathcal L}_l^{\prime\prime} - \widehat{\mathcal L}_{l+1} \right)
    + 2 f_1 f_n
    + \sigma_n \brackets{z}
    - \frac{1}{2} t_{n-1} z
    }
    = \alpha_n \brackets{\alpha_n - 1}.
\end{equation}
\end{thm}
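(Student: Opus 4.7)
The plan is to convert the hierarchy equation \eqref{PIIhier} into a scalar equation for $\sigma_n$ by exploiting the Hamiltonian structure \eqref{i:hpq} and the formal identity between the two Lenard recursions \eqref{i:lenard} and \eqref{i:lenards}.

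First I would compute $\sigma_n'(z)$ directly from the Hamiltonian: along solutions $\sigma_n'(z) = \partial_z \mathcal H^{(n)}$, since Hamilton's equations \eqref{i:hpq} annihilate the Poisson-bracket contribution to the total derivative. The explicit polynomial form of $\mathcal H^{(n)}$ recorded in \cite{MM} is expected to produce the central identification
\[
\sigma_n'(z) - \tfrac{t_{n-1}}{2} \;=\; w_z - w^2,
\]
which is precisely the substitution under which $4(w_z - w^2) = 2(2\sigma_n' - t_{n-1})$ and $2(w_z - w^2)_z = 2\sigma_n''$, making the recursions \eqref{i:lenard} and \eqref{i:lenards} formally identical. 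An induction on $k$ then gives $\widehat{\mathcal L}_k[\sigma_n' - t_{n-1}/2] = \mathcal L_k[w_z - w^2]$ for all $k \geq 0$, with \eqref{eq:L0s} corresponding to the value of $\mathcal L_k$ on the constant $u = -t_{n-1}/2$ -- the central-binomial factor $\tfrac{1}{k}\binom{2k}{k}$ is exactly what arises from iterating the cubic recursion on a constant. Consequently $f_n = \sum_{l=1}^n t_l \mathcal L_l[w_z - w^2]$ is the differential polynomial inside $(\od{z} + 2 w)$ in \eqref{PIIhier}, and the hierarchy collapses to the Riccati-type relation
\[
(z - 2 f_n)\, w \;=\; f_n'(z) - \alpha_n. \qquad (\star)
\]

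Next I would eliminate $w$ to obtain a scalar equation for $\sigma_n$. Differentiating $(\star)$, substituting $w_z = w^2 + \sigma_n' - t_{n-1}/2$, and using $(\star)$ again after multiplication by $(z - 2 f_n)$, a short calculation that replaces $w^2(z - 2 f_n)^2$ by $(f_n' - \alpha_n)^2$ yields
\[
\alpha_n(\alpha_n - 1) \;=\; (f_n')^2 - f_n' + (z - 2 f_n)\bigl[\,f_n'' - (z - 2 f_n)(\sigma_n' - \tfrac{t_{n-1}}{2})\bigr].
\]
Noting $f_n'' = \sum_l t_l \widehat{\mathcal L}_l''$ and comparing with \eqref{SF1}, the theorem then reduces to the single algebraic identity
\[
\sum_{l=1}^{n} t_l\, \widehat{\mathcal L}_{l+1} \;=\; \sigma_n + z\,\sigma_n' - t_{n-1}\, z + 2 f_n\bigl(f_1 - \sigma_n' + \tfrac{t_{n-1}}{2}\bigr).
\]

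The main obstacle is proving this last identity. My plan is induction on $n$: use \eqref{i:lenards} to rewrite each $\widehat{\mathcal L}_{l+1}$ in terms of $\widehat{\mathcal L}_l''$ and quadratic terms in lower-index Lenard polynomials and $\sigma_n' - t_{n-1}/2$, combined with the decomposition of $\mathcal H^{(n)}$ in \cite{MM} as a weighted sum $\sum_l t_l K_l$ of Lenard monomials. The term $\sigma_n$ on the right-hand side comes directly from this decomposition, $z\sigma_n'$ appears from the $t_0 = -z$ contribution through a Leibniz manipulation, and the bilinear $f_n(f_1 - \sigma_n' + t_{n-1}/2)$ absorbs the lower-order cross terms produced when the recursion acts on the top-weight piece. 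Once this identity is secured, back-substitution yields \eqref{SF1} exactly.
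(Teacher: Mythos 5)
Your proposal tracks the paper's own argument closely for most of its length: the identification $\sigma_n'-\frac{t_{n-1}}{2}=w'-w^2$ via $\sigma_n'=\partial_z{\mathcal H}^{(n)}$, the inductive matching of $\widehat{\mathcal L}_k$ with $\mathcal L_k$ (including the central-binomial boundary constants), the Riccati relation $(z-2f_n)\,w=f_n'-\alpha_n$, and the elimination of $w$ leading to
\[
-f_n'+\brackets{f_n'}^2+\brackets{z-2f_n}f_n''-\brackets{z-2f_n}^2\left(\sigma_n'-\tfrac{t_{n-1}}{2}\right)=\alpha_n\brackets{\alpha_n-1}
\]
are all exactly the paper's steps. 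You also correctly reduce the theorem to the identity $\sum_{l=1}^n t_l\widehat{\mathcal L}_{l+1}=\sigma_n+z\sigma_n'-t_{n-1}z$; note that your extra term $2f_n\bigl(f_1-\sigma_n'+\tfrac{t_{n-1}}{2}\bigr)$ is identically zero, since $f_1=\widehat{\mathcal L}_1=\sigma_n'-\tfrac{t_{n-1}}{2}$.

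The gap is in your plan for proving that identity. You propose a direct structural induction on $n$ using a ``decomposition of ${\mathcal H}^{(n)}$ as a weighted sum of Lenard monomials,'' but no such decomposition is available: the Hamiltonian \eqref{HIIn} is a quadratic form in the matrix entries $a^{(n)}_k$, $b^{(n)}_k$, and rewriting it in terms of Lenard polynomials of $\sigma_n'$ is precisely the nontrivial content being sought. Moreover, the mechanism you name for absorbing the cross terms is the bilinear term that is identically zero, so it cannot absorb anything; this suggests the computation has not actually been carried out. The paper obtains the identity essentially for free: differentiate the order-$(2n+1)$ relation displayed above, and observe via the recursion \eqref{i:lenards} that the result equals $(z-2f_n)$ times the total $z$-derivative of $\sum_{l}t_l\widehat{\mathcal L}_{l+1}-z\widehat{\mathcal L}_1-\sigma_n+\tfrac{1}{2}t_{n-1}z$; since $z-2f_n\neq 0$ this integrates to the required identity, with the integration constant absorbed into a constant shift of $\sigma_n$. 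That last point also matters for your route: the identity only holds up to an additive constant depending on the normalization of ${\mathcal H}^{(n)}$, so any direct proof must either pin down that normalization or, as the paper does, absorb the constant into $\sigma_n$. Until the identity is established by one of these means, the back-substitution yielding \eqref{SF1} is not justified.
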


\begin{rem}
We note that an other equation of order $2n+1$ for the sigma function of the $n$-the element of the second Painlev\'e hierarchy was produced by Stuart Andrew, a former master student of the second author in integral form \cite{SA}. The formula \eqref{SF1} in this paper is an explicit ODE of order $2 n$ as expected (see Lemma \ref{lm:order} in Section \ref{se:proof}).
\end{rem}


\textbf{Acknowledgements.} 
The authors would like to express their gratitude to Volodya Rubtsov for introducing them to each other. 
The authors are also grateful to Vladimir Poberezhnyi, who initiated I.B. to the \Painleve equations theory and constantly supported her during her scientific work.
The research of I.B. is a part of her PhD program studies
at the Higher School of Economics (HSE).  I. B. would like to thank to Faculty of Mathematics for giving her such opportunity. The research of M.M. is supported by the EPSRC Research Grant $EP/P021913/1$. The research of I.B. was partially supported by the RFBR Grant 18-01-00461 A. We dedicate this paper in lasting memory of Boris A. Dubrovin.

\section{Hamiltonian structure of the second Painlev\'e hierarchy}

In this section we resume some results in \cite{MM} that turn out to be useful in our proof of Theorem \ref{SFthm}.

Let us consider the isomonodromic deformation problem for the $\PIIn{n}$ hierarchy
\begin{align*}
\dfrac{\partial \Psi}{\partial z}
&= \mathcal{B} \Psi
=
	\begin{pmatrix}
    -\lambda & w \\
    w & \lambda
    \end{pmatrix}
    \Psi
,
\\
\dfrac{\partial \Psi}{\partial \lambda}
&= \mathcal{A}^{(n)} \Psi
= \dfrac{1}{\lambda}
\left[
\begin{pmatrix}
- \lambda z & - \alpha_n
\\
- \alpha_n & \lambda z
\end{pmatrix}
+ M^{(n)} 
+ \sum_{l = 1}^{n - 1} t_l M^{(l)}
\right]
,
\\
\brackets{2 k + 1} \dfrac{\partial \Psi}{\partial t_k}
&= 
\left(
M^{(k)}
- 
\begin{pmatrix}
0 & \brackets{\partial_z + 2 w} \mathcal{L}_k \LieBrackets{w^{\prime} - w^2}
\\
\brackets{\partial_z + 2 w} \mathcal{L}_k \LieBrackets{w^{\prime} - w^2} & 0
\end{pmatrix}
\right)
\Psi,
\end{align*}
where the matrix $M^{(l)}$ is defined as
\begin{equation*}
M^{(l)}
=
\begin{pmatrix}
\dsum_{j = 1}^{2 l + 1} A_j^{(l)} \lambda^j
&
\dsum_{j = 1}^{2l} B_j^{(l)} \lambda^j
\\
\dsum_{j = 1}^{2l} C_j^{(l)} \lambda^j
&
- \dsum_{j = 1}^{2 l + 1} A_j^{(l)} \lambda^j
\end{pmatrix},
\end{equation*}
with 
\begin{align}
	\label{entrMn}
    A_{2l+1}^{\brackets{l}} 
    &= 4^l;
    \qquad 
    A_{2k}^{\brackets{l}} 
    = 0, 
    \quad
    k 
    = 0, \dots, l; 
    \\
    \notag
	A_{2k+1}^{(l)} 
	&= \dfrac{4^{k+1}}{2} 
	\left\{ {\mathcal{L}}_{l-k} \left[ w^{\prime} - w^2 \right] 
	- \od{z} \left( \od{z} + 2 w \right) {\mathcal{L}}_{l-k-1} \left[ w^{\prime} - w^2 \right]
	\right\},
    \quad
    k = 0, \dots l -1; 
    \\
    \notag
	B_{2k+1}^{(l)} 
	&= \dfrac{4^{k+1}}{2} \od{z} \left( \od{z} + 2 w \right) 
	{\mathcal{L}}_{l-k-1} \left[ w^{\prime} - w^2 \right],
    \quad 
    k = 0, \dots, l - 1;
    \\
    \notag
	B_{2k}^{(l)} 
	&= - 4^k \left( \od{z} + 2 w \right) 
	{\mathcal{L}}_{l-k} \left[ w^{\prime} - w^2 \right], 
    \quad
    k = 1, \dots, l.
\end{align}

The compatibility condition 
\begin{equation*}
    \dfrac{\partial \mathcal{A}^{(n)}}{\partial z} 
    - \dfrac{\partial \mathcal{B}}{\partial \lambda} 
    = \LieBrackets{\mathcal{B}, \mathcal{A}^{(n)}}
\end{equation*}
gives the $n$-th member of the second \Painleve hierarchy \eqref{PIIhier}.

It is convenient to introduce new notations to define the matrix $\mathcal{A}^{(n)}$. Let us set
\begin{align} \label{entrAn}
	a_{2k+1}^{(n)} 
	&= \sum_{l=1}^n t_l A_{2k+1}^{(l)},
	\quad
	k = 1, \dots, n;
	&	
	a_1^{(n)} 
	&= \sum_{l=1}^{n} t_l A_1^{(l)} - z; 
	\\
	\notag
	b_{2k+1}^{(n)} 
	&= \sum_{l=1}^{n} t_l B_{2k+1}^{(l)},
	\quad
	k = 1, \dots, n - 1;
	\\
	\notag
	b_{2k}^{(n)} 
	&= \sum_{l=1}^{n} t_l B_{2k}^{(l)}, 
	\quad
	k = 1, \dots, n;
	&
	b_0^{(n)} 
	&= - \alpha_n, 
	\quad t_n = 1.
\end{align}

Therefore, $\mathcal{A}^{(n)}$ can be represented in the following form
\begin{equation*}
    \mathcal{A}^{(n)} 
    = 
    \begin{pmatrix}
    \dsum_{k=0}^{n} a_{2k+1}^{(n)} \lambda^{2k} & 
    \dsum_{k=0}^{2n} b_{k}^{(n)} \lambda^{k-1} \\
    \dsum_{k=0}^{2n} \brackets{-1}^k b_{k}^{(n)} \lambda^{k-1} 	& 
    - \dsum_{k=0}^{n} a_{2k+1}^{(n)} \lambda^{2k} \\
    \end{pmatrix}.
\end{equation*}

The canonical coordinates are given by relations
\begin{gather*}
    P_k 
    = \Pi_{2k} 
    = \dfrac{a_{2(n-k)+1}^{(n)} + b_{2(n-k)+1}^{(n)}}{a_{2n+1}^{(n)}},
    \quad
    Q_k 
    = \sum_{j = 1}^n \dfrac{1}{2j} b_{2j}^{(n)} \dfrac{\partial S_{2j}}{\partial \Pi_{2k}},
    \quad 
    k 
    = 1, \dots, n; 
    \\
    S_k = \sum_{j=1}^{2n} q_j^k,
    \quad
    k = 1, \dots, 2n;
    \\
    \Pi_1 
    = q_1 + \dots + q_{2n}, 
    \quad
    \Pi_2 
    = \sum_{1 \leq j \leq 2n} q_j q_k, 
    \quad
    \dots,
    \quad
    \Pi_{2n} 
    = q_1 q_2 \dots q_{2n},
\end{gather*}
where $q_j$ are solutions of the following equation
\begin{equation*}
    \sum_{k=0}^{n-1} \brackets{b_{2k+1}^{(n)} + a_{2k+1}^{(n)}} q_j^{2k} + a_{2n+1}^{(n)} q_j^{2n} = 0,
\end{equation*}
and $p_j = \displaystyle \sum_{k=0}^{n} b_{2k}^{(n)} q_j^{2k-1}$.

The coordinates $P_1, \dots, P_n$, $Q_1, \dots, Q_n$  are canonical with the Poisson structure
\begin{equation*}
	\PoissonBrackets{P_i, P_j}
	= \PoissonBrackets{Q_i, Q_j}
	= 0, 
	\quad
	\PoissonBrackets{P_i, Q_j}
	= \delta_{ij},
	\quad 
	i, j 
	= 1, \dots, n.
\end{equation*}

The corresponding Hamiltonian in terms of the canonical coordinates is 
\begin{multline} \label{HIIn}
	{\mathcal H}^{(n)}
	\brackets{P_1, \dots, P_n, Q_1, \dots, Q_n, z}
	= - \dfrac{1}{4^n}
	\left(
	\dsum_{l = 0}^{n - 1} 
	a_{2 l + 1}^{(n)} a_{2 (n - l) - 1}^{(n)}
	- \dsum_{l = 0}^{n - 1}
	b_{2 l + 1}^{(n)} b_{2 (n - l) - 1}^{(n)}
	\right.
	\\
	\left.
	+ \dsum_{l = 0}^n
	b_{2l}^{(n)} b_{2 (n - l)}^{(n)}
	\right)
	+ \dfrac{Q_n}{4^n},
\end{multline}
where the coefficients of $\mathcal{A}^{(n)}$ can be expressed as polynomials in the canonical coordinates by theorem 6.1 in \cite{MM}.

We conclude this section be reminding a useful formula valid both for the Lenard operators $\mathcal L_n$ and $\widehat{\mathcal L}_n$ \cite{AFord}:
\begin{align} \label{eq:lzz}
    {\mathcal L}_{n+1} 
    =
    \p_z^2{\mathcal L}_{n}
    &+ 3{\mathcal L}_{n}{\mathcal L}_{1}
    + \dsum_{j = 1}^{l - 1}
    \brackets{{\mathcal{L}}_{l - j}  
    \brackets{4 {\mathcal{L}}_1    {\mathcal{L}}_j  
    - {\mathcal{L}}_{j + 1} 
    + 2 {\mathcal{L}}_{j}^{\prime\prime} 
    }
    - {\mathcal{L}}_j^{\prime}  
    {\mathcal{L}}_{l - j}^{\prime}
    },
    \quad
    n > 1.
\end{align}

\section{Proof of main theorem}\label{se:proof}

Firstly, the following correlation between the sigma coordinates and the solution of the $\Ptn$ equation was proved in \cite{SA}:

\begin{lem} \label{sigmaWcorrel} The sigma function $ \sigma_n$ of definition \eqref{eq-s-def} is related to the solution $w_n$ of \eqref{PIIhier} by the following
$$
w_n^{\prime} - w_n^2 = \sigma_n^{\prime} - \dfrac{t_{n - 1}}{2},
$$ 
where $t_0 = - z$.
\end{lem}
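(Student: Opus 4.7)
My plan is to compute $\sigma_n'$ by differentiating the Hamiltonian formula \eqref{HIIn} directly along a solution and then simplifying via the Lax pair. Write $\mathcal{H}^{(n)} = -\mathcal{S}/4^n + Q_n/4^n$, where $\mathcal{S}$ denotes the quadratic expression $\sum_l a_{2l+1}^{(n)} a_{2(n-l)-1}^{(n)} - \sum_l b_{2l+1}^{(n)} b_{2(n-l)-1}^{(n)} + \sum_l b_{2l}^{(n)} b_{2(n-l)}^{(n)}$. Then along any solution of \eqref{PIIhier}, $\sigma_n' = -\mathcal{S}'/4^n + Q_n'/4^n$, so the lemma reduces to evaluating these two derivatives separately.

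To handle $\mathcal{S}'$, I would first recognize $\mathcal{S}$ as the coefficient of $\lambda^{2(n-1)}$ in $\tfrac{1}{2}\tr(\mathcal{A}^{(n)})^2 = A^2 + BC$, where $A$, $B$, $C$ are the $(1,1)$, $(1,2)$, $(2,1)$ entries of $\mathcal{A}^{(n)}$; this identification follows from a careful parity analysis of the $B$, $C$ expansions. Using the Lax compatibility and cyclicity of the trace, $\tfrac{1}{2}\tfrac{d}{dz}\tr(\mathcal{A}^{(n)})^2 = \tr(\mathcal{A}^{(n)}\partial_\lambda \mathcal{B}) = -2A$, and matching the coefficient of $\lambda^{2(n-1)}$ yields $\mathcal{S}' = -2 a_{2n-1}^{(n)}$. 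Direct inspection of \eqref{entrMn} and \eqref{entrAn}, using $\mathcal{L}_1[w'-w^2] = w' - w^2$ and the top coefficient $A_{2l+1}^{(l)} = 4^l$, shows that only $l = n-1$ and $l = n$ contribute to $a_{2n-1}^{(n)}$, giving $a_{2n-1}^{(n)} = 4^{n-1}(t_{n-1} - 2w^2)$ and hence $-\mathcal{S}'/4^n = t_{n-1}/2 - w^2$.

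To handle $Q_n'$, I would show that $Q_n = 4^n w$. By Newton's identities applied to the $2n$ roots $q_j$, the power sum $S_{2j}$ depends only on $\Pi_1, \ldots, \Pi_{2j}$, so $\partial S_{2j}/\partial \Pi_{2n}$ vanishes for $j < n$, while $\partial S_{2n}/\partial \Pi_{2n} = -2n$. Combined with the explicit formula $b_{2n}^{(n)} = -4^n(\partial_z + 2w)\mathcal{L}_0 = -4^n w$, the sum defining $Q_n$ collapses to the single surviving term $\tfrac{1}{2n}(-4^n w)(-2n) = 4^n w$, whence $Q_n'/4^n = w'$. Assembling the two pieces gives $\sigma_n' = (t_{n-1}/2 - w^2) + w' = t_{n-1}/2 + (w' - w^2)$, which rearranges to the claimed identity.

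I expect the main obstacle to lie in the first step: matching $\mathcal{S}$ with the correct Laurent coefficient of $\tfrac{1}{2}\tr(\mathcal{A}^{(n)})^2$, and in particular tracking the contribution of $\partial_\lambda \mathcal{B}$ in the Lax identity, which provides the crucial $-1$ in $(a_1^{(n)})' = -1 - 2w b_1^{(n)}$ without which $\mathcal{S}'$ would collapse to zero along solutions.
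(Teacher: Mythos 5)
Your argument is correct, but it follows a genuinely different route from the paper's. The paper's proof is essentially two lines: along a solution of \eqref{i:hpq} one has $\sigma_n' = \{\mathcal{H}^{(n)},\mathcal{H}^{(n)}\} + \partial_z\mathcal{H}^{(n)} = \partial_z\mathcal{H}^{(n)}$, and since (by the canonical-coordinate expressions of Theorem 6.1 in \cite{MM}) the only explicit $z$-dependence in \eqref{HIIn} sits in the $-z$ inside $a_1^{(n)}$, everything reduces to reading off $\tfrac{1}{2^{2n-1}}\bigl(t_nA_{2n-1}^{(n)}+t_{n-1}A_{2n-1}^{(n-1)}\bigr)$. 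You instead compute the \emph{total} derivative head-on: the spectral part $\mathcal{S}$ is correctly identified as the $\lambda^{2n-2}$-coefficient of $\tfrac12\tr(\mathcal{A}^{(n)})^2=A^2+BC$ (the parity bookkeeping with $C_k=(-1)^kB_k$ does produce exactly the three sums in \eqref{HIIn}), the trace identity $\tfrac12\tfrac{d}{dz}\tr(\mathcal{A}^{(n)})^2=\tr(\mathcal{A}^{(n)}\partial_\lambda\mathcal{B})=-2A$ gives $\mathcal{S}'=-2a_{2n-1}^{(n)}$ with $a_{2n-1}^{(n)}=4^{n-1}(t_{n-1}-2w^2)$ from \eqref{entrMn}--\eqref{entrAn}, and Newton's identities collapse $Q_n$ to $-b_{2n}^{(n)}=4^nw$ (indeed $\partial S_{2j}/\partial\Pi_{2n}=0$ for $j<n$ and $\partial S_{2n}/\partial\Pi_{2n}=-2n$). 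Both halves check out and assemble to the claimed identity, uniformly in $n$, including $n=1$ with $t_0=-z$. What your route buys is self-containedness — it needs only the Lax pair and the printed formulas, not the external fact about which coefficients are $z$-free in the canonical coordinates — and it makes transparent where each of the three terms originates: $w'$ from $Q_n$, $-w^2$ and $t_{n-1}/2$ from the isospectral part. What the paper's route buys is brevity. One caveat worth flagging: your evaluation $A_{2n-1}^{(n)}=-\tfrac{4^n}{2}w^2$ (the literal reading of \eqref{entrMn}, since $\tfrac{d}{dz}(\tfrac{d}{dz}+2w)\mathcal{L}_0=w'$) is incompatible with the value $\tfrac{4^n}{2}(w'-w^2)$ that the paper's own proof implicitly uses when it writes $\tfrac{1}{2^{2n-1}}A_{2n-1}^{(n)}=\mathcal{L}_1[w'-w^2]$; the two cannot both agree with \eqref{entrMn} as printed, which points to an operator-ordering typo somewhere. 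Your derivation of $\mathcal{S}'$ via the trace identity is insensitive to this and still lands on the stated formula, so the discrepancy does not undermine your proof, but you should note it explicitly rather than leave the reader to reconcile the two computations.
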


\begin{proof}
By definition \eqref{eq-s-def} of the sigma-coordinates, we have
\begin{equation*}
    \sigma_n(z)
    := {\mathcal H}^{(n)}
    \left( 
    P_1(z), \dots, P_n(z), Q_1(z), \dots, Q_n(z) 
    \right).
\end{equation*}
where $P_1, \dots, P_n$, $Q_1, \dots, Q_n$ are canonical coordinates. Its first derivative is
\begin{equation*}
    \sigma_n^{\prime} \brackets{z} 
    = \PoissonBrackets{{\mathcal H}^{(n)}, {\mathcal H}^{(n)}} 
    + \dfrac{\partial {\mathcal H}^{(n)}}{\partial z} 
    = \dfrac{\partial {\mathcal H}^{(n)}}{\partial z}
    .
\end{equation*}
By formula \eqref{HIIn}, the only term in ${\mathcal H}^{(n)}$ that depends explicitly on $z$ is
\begin{equation*}
    - \dfrac{1}{2^{2n - 1}} a_1^{(n)} a_{2n - 1}^{(n)}.
\end{equation*}
Hence, using formulas \eqref{entrMn} and \eqref{entrAn}, $\sigma_n^{\prime} (z)$ is calculated as 
\begin{align*}
    \sigma_n^{\prime} \brackets{z} 
    &= \dfrac{\partial {\mathcal H}^{(n)}}{\partial z}
    = - \dfrac{1}{2^{2n - 1}} \partial_z 
    \brackets{a_1^{(n)} a_{2n - 1}^{(n)}}
    \\
    &= - \dfrac{1}{2^{2n - 1}} \partial_z 
    \brackets{
    \brackets{\dsum_{l = 1}^n t_l A_1^{(l)} - z} 
    \dsum_{l = 1}^n t_l A_{2n - 1}^{(l)}
    }
    \\
    &= \dfrac{1}{2^{2n - 1}}
    \dsum_{l = 1}^n t_l A_{2n - 1}^{(l)}
    = \dfrac{1}{2^{2n - 1}} \brackets{
    t_n A_{2n - 1}^{(n)} 
    + t_{n - 1} A_{2n - 1}^{(n - 1)} 
    }
    \\
    &= \mathcal{L}_1 \LieBrackets{w_n^{\prime} - w_n^2}
    + \dfrac{t_{n - 1}}{2}
    = w_n^{\prime} - w_n^2
    + \dfrac{t_{n - 1}}{2}.
\end{align*}
\end{proof}

Note that thanks to Lemma \ref{sigmaWcorrel}, we can define all Lenard operators in terms of $\sigma_n (z)$ rather than $w (z)$. However, we need to be careful in the integration step to extract $\mathcal L_{n+1} \LieBrackets{w^{\prime} - w^2}$ from formula \eqref{i:lenard}. The fact that the integrand is exact was proved in \cite{Lax}, and the choice of the integration constant depends on the condition we impose on $\mathcal L_{n}[0]$. 

\begin{lem}
For the $n$-th member of the second Pailev\'e hierarchy, $n > 1$, the Lenard operators defined by \eqref{i:lenard}, \eqref{eq:L0} 
coincide with the operators defined recursively by relations \eqref{i:lenards} with boundary conditions \eqref{eq:L0s}, or in other words
$$
    \mathcal L_k\left[w_n^{\prime} - w_n^2\right] 
    = \widehat{\mathcal L}_k \left[\sigma_n'-\frac{t_{n-1}}{2}\right],
$$
where $w_n$ denotes the solution of \eqref{PIIhier}.
\end{lem}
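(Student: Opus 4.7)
The plan is to proceed by induction on $k$, exploiting the fact that, under the identification provided by Lemma~\ref{sigmaWcorrel}, the two Lenard-type recursions \eqref{i:lenard} and \eqref{i:lenards} carry literally the same third-order differential operator on their right-hand sides. Writing $v := w_n' - w_n^2$ and $u := \sigma_n' - t_{n-1}/2$, Lemma~\ref{sigmaWcorrel} gives $v = u$ as functions of $z$ along any solution of \eqref{PIIhier}, and consequently $2(2\sigma_n' - t_{n-1}) = 4u = 4v$ together with $2\sigma_n'' = 2u' = 2v'$. Thus the operator $\frac{d^3}{dz^3} + 4v\,\od{z} + 2v'$ in \eqref{i:lenard} coincides, on solutions, with the operator $\frac{d^3}{dz^3} + 2(2\sigma_n' - t_{n-1})\,\od{z} + 2\sigma_n''$ in \eqref{i:lenards}.

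The base case $k=0$ is immediate since $\mathcal{L}_0 = \widehat{\mathcal{L}}_0 = \tfrac12$. For the inductive step, suppose $\mathcal{L}_k[v] = \widehat{\mathcal{L}}_k[u]$ as functions of $z$ on a solution. Applying the common operator to both sides and invoking the recursions \eqref{i:lenard} and \eqref{i:lenards} gives $\od{z}\mathcal{L}_{k+1}[v] = \od{z}\widehat{\mathcal{L}}_{k+1}[u]$, so the difference $\mathcal{L}_{k+1}[v] - \widehat{\mathcal{L}}_{k+1}[u]$ is constant in $z$ along the solution trajectory. What remains is to show this $z$-constant is zero, and this is precisely where the boundary data \eqref{eq:L0} and \eqref{eq:L0s} enter.

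To perform the constant-matching, I would use the closed-form Lax identity \eqref{eq:lzz}, valid for both $\mathcal{L}$ and $\widehat{\mathcal{L}}$, to express $\mathcal{L}_{k+1}$ and $\widehat{\mathcal{L}}_{k+1}$ algebraically in terms of the lower-order operators and their derivatives, substitute the inductive hypothesis, and then collect the boundary contributions produced at each previous level. The specific combinatorial coefficients $(-t_{n-1}/2)^k\tfrac1k\binom{2k}{k}$ in \eqref{eq:L0s} are precisely those dictated by this algebraic computation: they are the integration constants generated when one repeatedly evaluates the Lenard recursion on constant arguments, and they organize themselves according to the Catalan-type generating function $\sum_{k\ge 1}\tfrac1k\binom{2k}{k}x^k$. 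The main obstacle is verifying this combinatorial matching, which can equivalently be done by comparing the two formalisms on a reference trajectory (for instance, the trivial $w_n\equiv 0$ solution when $\alpha_n = 0$) where both sides admit closed-form evaluation; once it is in place, the $z$-constant vanishes at every step and the induction closes.
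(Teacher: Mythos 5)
Your proposal follows essentially the same route as the paper: induction on $k$, the observation that the two recursions carry the identical third-order operator once $w_n'-w_n^2=\sigma_n'-\tfrac{t_{n-1}}{2}$ is substituted, so that the difference at each level is a $z$-constant, and the reduction of the whole matter to matching those constants via the algebraic identity \eqref{eq:lzz} and the boundary data. The paper merely makes your final step explicit by extracting from \eqref{eq:lzz} the discrete recursion $c_{l+1}=3c_1c_l+\sum_{j=1}^{l-1}c_{l-j}\left(4c_1c_j-c_{j+1}\right)$ for the constant terms and checking that the values prescribed in \eqref{eq:L0s} solve it.
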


\begin{proof} 
We prove this statement by induction. We know that 
$$
    {\mathcal L}_1 \left[ w_n^{\prime} - w_n^2 \right]
    = w_n^{\prime}- w_n^2.
$$ 
On the other side, using \eqref{i:lenards} we have 
$$
    \od{z}
    \widehat{\mathcal L}_{1} \left[\sigma_n'-\frac{t_{n-1}}{2}\right]
    =  \sigma_n'',
$$ 
so that 
$$
    \widehat{\mathcal L}_{1} \left[\sigma_n'-\frac{t_{n-1}}{2}\right]
    =  \sigma_n'
    + \const,
$$ 
and by imposing the boundary condition \eqref{eq:L0s}, we obtain
$$
    \widehat{\mathcal L}_{1} \left[\sigma_n'-\frac{t_{n-1}}{2}\right]
    =  \sigma_n' 
    - \frac{ t_{n-1}}{ 2},
$$ 
that  due to Lemma \ref{sigmaWcorrel} gives $\mathcal L_1 \left[ w^{\prime} - w^2 \right] = \widehat{\mathcal L}_1 \left[ \sigma_n'-\frac{t_{n-1}}{2} \right]$.

Let us now assume that the statement is true for $k = l$ and prove it for $k= l + 1$. 

Let us call $c_{l + 1}$ the constant term of $\widehat{\mathcal{L}}_{l + 1}  \LieBrackets{ \sigma_n'-\frac{t_{n-1}}{2} }$. By formula \eqref{eq:lzz}, we have that $c_{l + 1}$ is defined as
\begin{align*}
    c_{l + 1}
    &= 3 c_1 c_l
    + \dsum_{j = 1}^{l - 1}
    c_{l - j}
    \brackets{4 c_1 c_j
    - c_{j + 1}
    }
    .
\end{align*}
This discrete equation is solved by $c_k=\left(- \frac{ t_{n-1}}{ 2}\right)^k\frac{1}{k}\left(\begin{array}{c} 2k\\ k\\ \end{array}\right)$ as we wanted to prove.
\end{proof}

\begin{rem}
When $n = 1$, $t_{n-1}=t_0=-z$ is no longer constant. In this there is no change of boundary condition and the operators $\widehat{\mathcal L}_k$ are simply  the standard operators ${\mathcal L}_k$ defined by \eqref{i:lenard}, \eqref{eq:L0}, applied to $\sigma_1'+ z/2$. In this proof of theorem \ref{SFthm} we consider $n\geq 1$ and prove \eqref{SF1} as well as the following (valid for $n=1$)
\begin{equation}\label{SF1n1}
    - f_1^{\prime} + \brackets{f_1^{\prime}}^2 
    + \brackets{z - 2 f_1} 
    \brackets{
    \left(  \widehat{\mathcal L}_1^{\prime\prime}
    - \widehat{\mathcal L}_{2} \right)
    + 2 f_1^2
    + \sigma_1 \brackets{z}
    + \frac{z^2}{4}}
    = \alpha_1 \brackets{\alpha_1 - 1},
\end{equation}
\end{rem}

Now we can proceed to the proof of theorem \ref{SFthm}.

\begin{proof}[Proof of theorem \ref{SFthm}]
Suppose that $z - 2 f_n \neq 0$, i.e. $\alpha_n \notin \frac{1}{2} \mathbb{Z}$. Then $w\brackets{z}$ is expressed from \eqref{PIIhier} as
\begin{equation} \label{wExpr}
    w = \dfrac{f_n^{\prime} - \alpha_n}{z - 2 f_n}.
\end{equation}

From lemma \ref{sigmaWcorrel} and \eqref{wExpr} we obtain
\begin{equation} \label{gnExpr}
    - f_n^{\prime} 
    + \brackets{f_n^{\prime}}^2 
    + \brackets{z - 2 f_n} f_n^{\prime\prime}
    - \undertext{\widehat{\mathcal{L}}_1 \LieBrackets{\sigma_n^{\prime} - \frac{t_{n-1}}{2}}}{\left( \sigma_n^{\prime} - \frac{t_{n-1}}{2} \right)}
    \brackets{z - 2 f_n}^2 
    = \alpha_n \brackets{\alpha_n - 1}.
\end{equation}
This equation involves derivatives of $\sigma_n \brackets{z}$ of order $2n + 1$. Since we are looking for an ODE of order $2n$, we need to remove these. To this aim, 
we differentiate \eqref{gnExpr} obtaining 
\begin{equation}\label{dgnExpr}
    \brackets{z - 2 f_n}
    \brackets{f_n^{\prime\prime\prime} 
    + 4 \widehat{\mathcal{L}}_1 f_n^{\prime}
    + 2 \widehat{\mathcal{L}}_1^{\prime} f_n 
    - 2 \widehat{\mathcal{L}}_1  - z \widehat{\mathcal{L}}_1^{\prime}}
    = 0,
\end{equation}
and use the Lenard recursion relation \eqref{i:lenard} obtaining
\begin{equation} \label{dgnExpr2}
    \brackets{z - 2 f_n}
    \dfrac{d}{dz} \brackets{
    \sum_{l = 1}^n t_l \widehat{\mathcal{L}}_{l + 1}  
    - z\widehat{\mathcal{L}}_1- \sigma_n \brackets{z} + h_n \brackets{z}
    + \const
    }
    = 0,
\end{equation}
where we understand that $\widehat{\mathcal{L}}_k$ is applied to $\sigma_n'-\frac{t_{n-1}}{2}$  and
\begin{equation*}
   h_n(z)=\dfrac{1}{2} z 
    \brackets{t_{n - 1} \brackets{1 - \delta_{n, 1}} - \dfrac{1}{2} z \, \delta_{n, 1}}.
\end{equation*}
By our assumption, $z - 2 f_n \neq 0$. Thus, \eqref{dgnExpr2} becomes
\begin{equation} \label{dgnExpr4}
    \sum_{l = 1}^n t_l \widehat{\mathcal{L}}_{l + 1}  
    - z \widehat{\mathcal{L}}_1 
    - \sigma_n \brackets{z} 
    + h_n \brackets{z}
    = 0,
\end{equation}
where we have absorbed the integration constant in sigma (constant shifts in sigma do not change the dynamics).

So we have two equations, \eqref{gnExpr} and \eqref{dgnExpr4} that both contain derivatives of $\sigma_n \brackets{z}$ of order $2n+1$. We can replace $f_n''$ in 
 \eqref{gnExpr} by $f_n'' -\left( \sum_{l = 1}^n t_l \widehat{\mathcal{L}}_{l + 1}  
    - z \widehat{\mathcal{L}}_1 
    - \sigma_n \brackets{z} 
    + h_n \brackets{z}\right)$ thus obtaining \eqref{SF1}.
\end{proof}

\begin{lem}\label{lm:order} The sigma form \eqref{SF1} is an ODE of order $2n$.
\end{lem}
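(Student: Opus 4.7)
The plan is to track the order of the highest derivative of $\sigma_n$ appearing in each term of \eqref{SF1} and show that the overall top derivative is $\sigma_n^{(2n)}$, with a nonzero coefficient that allows one to solve for it. The only nontrivial input is the degree count for the Lenard operators and the cancellation provided by the combination $\widehat{\mathcal L}_l^{\prime\prime}-\widehat{\mathcal L}_{l+1}$.

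First I would prove by induction on $k$ that $\widehat{\mathcal L}_k[\sigma_n'-t_{n-1}/2]$ is a differential polynomial in $\sigma_n$ whose highest derivative is exactly $\sigma_n^{(2k-1)}$, and that this derivative appears linearly (with a nonzero constant coefficient coming from the recursion). The base cases $k=0,1$ are immediate from the boundary condition and the definition of $\widehat{\mathcal L}_1$. For the inductive step, one inspects the recursion \eqref{i:lenards}: the leading term on the right is $\sigma_n^{(2k+1)}$ coming from $\mathrm d^3/\mathrm d z^3$, and after integrating in $z$ this becomes $\sigma_n^{(2k)}$, matching the claim for $\widehat{\mathcal L}_{k+1}$. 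Since $t_n=1$, the sum $f_n=\sum_{l=1}^n t_l \widehat{\mathcal L}_l$ has top derivative $\sigma_n^{(2n-1)}$, so $f_n'$ has top derivative $\sigma_n^{(2n)}$, appearing linearly with coefficient $1$.

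Next I would exploit the identity \eqref{eq:lzz} (valid for $\widehat{\mathcal L}_k$ by the preceding lemma). Rearranging it gives
\begin{equation*}
    \widehat{\mathcal L}_{l+1}-\widehat{\mathcal L}_l^{\prime\prime}
    = 3\widehat{\mathcal L}_l\widehat{\mathcal L}_1
    +\sum_{j=1}^{l-1}\Bigl(\widehat{\mathcal L}_{l-j}\bigl(4\widehat{\mathcal L}_1\widehat{\mathcal L}_j-\widehat{\mathcal L}_{j+1}+2\widehat{\mathcal L}_j^{\prime\prime}\bigr)-\widehat{\mathcal L}_j^{\prime}\widehat{\mathcal L}_{l-j}^{\prime}\Bigr),
\end{equation*}
so the $\sigma_n^{(2l+1)}$ contributions in $\widehat{\mathcal L}_l^{\prime\prime}$ and $\widehat{\mathcal L}_{l+1}$ cancel, and each summand on the right has top derivative at most $\sigma_n^{(2l-1)}$ (checked term by term using the inductive bound and that $a+b\le l+1$ forces $\max(a,b)\le l$). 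Summing over $l\le n$ with weights $t_l$, the combination $\sum_{l=1}^n t_l\bigl(\widehat{\mathcal L}_l^{\prime\prime}-\widehat{\mathcal L}_{l+1}\bigr)$ is of order at most $2n-1$.

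Finally I collate the orders: the second factor in \eqref{SF1}, namely $\sum t_l(\widehat{\mathcal L}_l^{\prime\prime}-\widehat{\mathcal L}_{l+1})+2f_1f_n+\sigma_n-\tfrac12 t_{n-1}z$, has order at most $2n-1$, and the prefactor $(z-2f_n)$ is also of order at most $2n-1$; so their product contributes nothing of order $\ge 2n$. The remaining terms $-f_n^{\prime}+(f_n^{\prime})^2$ contribute, respectively, $-\sigma_n^{(2n)}$ and $(\sigma_n^{(2n)})^2+$ lower. Hence the equation has the shape
\begin{equation*}
    \bigl(\sigma_n^{(2n)}\bigr)^2+A\bigl(z,\sigma_n,\dots,\sigma_n^{(2n-1)}\bigr)\,\sigma_n^{(2n)}+B\bigl(z,\sigma_n,\dots,\sigma_n^{(2n-1)}\bigr)=\alpha_n(\alpha_n-1),
\end{equation*}
a genuine (quadratic-in-top-derivative) ODE of order $2n$. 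The only potential obstacle is book-keeping the highest-derivative contributions inside the quadratic terms of \eqref{eq:lzz}; once the inductive bound on $\widehat{\mathcal L}_k$ is in place this reduces to an index check on the sums.
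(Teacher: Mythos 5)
Your proposal is correct and follows essentially the same route as the paper: the key step in both is to invoke the identity \eqref{eq:lzz} so that the top-order contributions of $\widehat{\mathcal L}_l^{\prime\prime}$ and $\widehat{\mathcal L}_{l+1}$ cancel, leaving a second factor of order at most $2n-1$, while $-f_n^{\prime}+(f_n^{\prime})^2$ supplies the order-$2n$ terms. Your inductive bookkeeping of the top derivative of $\widehat{\mathcal L}_k$ and the observation that $\sigma_n^{(2n)}$ enters quadratically with leading coefficient $1$ make explicit what the paper states as immediate.
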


\proof By using formula \eqref{eq:lzz} we see immediately that \eqref{SF1} is equivalent to 
\begin{multline*} 
    - f_n^{\prime} 
    + \brackets{f_n^{\prime}}^2 
    + \brackets{z - 2 f_n}
    \left(
    \sigma_n \brackets{z}
    - z \frac{t_{n-1}}{2}
    - \brackets{ \sigma_n^{\prime} \brackets{z} 
    - \dfrac{t_{n - 1}}{2}}
    f_n
    \phantom{\dsum_{j = 1}^{l - 1}\widehat{\mathcal{L}}}
    \right.
    \\
    \left.
    - \left( 
    \dsum_{l = 1}^{n} t_l 
    \dsum_{j = 1}^{l - 1}
    \brackets{\widehat{\mathcal{L}}_{l - j}  
    \brackets{4 \widehat{\mathcal{L}}_1  \widehat{\mathcal{L}}_j  
    - \widehat{\mathcal{L}}_{j + 1} 
    + 2 \widehat{\mathcal{L}}_{j}^{\prime\prime} 
    }
    - \widehat{\mathcal{L}}_j^{\prime}  
    \widehat{\mathcal{L}}_{l - j}^{\prime}
    }
    \right)
    \right)
    = \alpha_n \brackets{\alpha_n - 1},
\end{multline*}
which is an ODE of order $2n$.
\endproof

To demonstrate how our theorem \ref{SFthm} works, we give two examples for cases $n = 1$ and $n = 2$.

\begin{exmp}
For $n = 1$ we use \eqref{SF1n1} to obtain
\begin{align*}
    - \brackets{\sigma_1^{\prime\prime} + \dfrac{1}{2}}
    + \brackets{\sigma_1^{\prime\prime} + \dfrac{1}{2}}^2
    - 2 \sigma_1^{\prime} 
    \brackets{\sigma_1 + \dfrac{1}{4} z^2 - \brackets{\sigma_1^{\prime} + \dfrac{z}{2}}^2}
    &= \alpha_1 \brackets{\alpha_1 - 1},
    \\
    \brackets{\sigma_1^{\prime\prime}}^2 
    - 2 \sigma_1 \sigma_1^{\prime} 
    + 2 z \brackets{\sigma_1^{\prime}}^2 
    + 2 \brackets{\sigma_1^{\prime}}^3 
    &= \brackets{\alpha_1 - \dfrac{1}{2}}^2.
\end{align*}
\begin{rem}
If we consider the following map of the Okamoto Hamiltonian in \cite{okam3}
\begin{equation*}
    \text{H}_{\text{II}} \brackets{p, q} 
    = \dfrac{1}{2} p \brackets{p - 2 q^2 - z}
    - \brackets{\alpha + \dfrac{1}{2}} q
    \quad
    \mapsto 
    \quad 
    2 \text{H}_{\text{II}} \brackets{2 p, \frac{1}{4} q} 
    + \frac{1}{2} q,
\end{equation*}
the Okamoto sigma form for the second \Painleve equation in \cite{okam3} coincides with our sigma form.
\end{rem}
\end{exmp}

\begin{exmp}
Set $n = 2$ in \eqref{SF1}:
\begin{gather*}
    - f_2^{\prime} 
    + \brackets{f_2^{\prime}}^2 
    + \brackets{z - 2 f_2}
    \brackets{
    t_1 \left(\widehat{\mathcal L}_1''
    - \widehat{\mathcal L}_{2}\right)
    + \left(\widehat{\mathcal L}_2''
    - \widehat{\mathcal L}_{3}\right)
    + 2 f_1 f_2 + \sigma_2 \brackets{z}
    - \frac{1}{2} t_{1} z
    }
    = \alpha_2 \brackets{\alpha_2 - 1},
\end{gather*}
where
\begin{gather*}
    f_1
    = \widehat{\mathcal{L}}_1 \LieBrackets{\sigma_2^{\prime} - \dfrac{t_1}{2}},
    \quad 
    f_2
    = t_1 \widehat{\mathcal{L}}_1 \LieBrackets{\sigma_2^{\prime} - \dfrac{t_1}{2}}
    + \widehat{\mathcal{L}}_2 \LieBrackets{\sigma_2^{\prime} - \dfrac{t_1}{2}},
\end{gather*}
with the Lenard operators
\begin{gather*}
    \widehat{\mathcal{L}}_1 
    \LieBrackets{\sigma_2^{\prime} - \dfrac{t_1}{2}}
    = \sigma_2^{\prime} - \dfrac{t_1}{2},
    \qquad
    \widehat{\mathcal{L}}_2 
    \LieBrackets{\sigma_2^{\prime} - \dfrac{t_1}{2}}
    = \sigma_2^{\prime\prime\prime}
    + 3 \brackets{\sigma_2^{\prime}}^2 
    - 3 t_1 \brackets{
    \sigma_2^{\prime}
    - \dfrac{1}{4} t_1},
\end{gather*}
\begin{align*}
    \widehat{\mathcal{L}}_3 
    \LieBrackets{\sigma_2^{\prime} - \dfrac{t_1}{2}}
    = \sigma_2^{(\text{v})}
    &+ 10 \sigma_2^{\prime} \sigma_2^{\prime\prime\prime}
    + 5 \brackets{\sigma_2^{\prime\prime}}^2 
    + 10 \brackets{\sigma_2^{\prime}}^3 
    - 5 t_1 \brackets{\sigma_2^{\prime\prime\prime}
    + 3 \brackets{\sigma_2^{\prime}}^2}
    + \dfrac{5}{2} t_1^2 \brackets{
    3 \sigma_2^{\prime} 
    - \dfrac{1}{2} t_1}.
\end{align*}

Then the sigma form for $\PIIn{2}$ is
\begin{align*}
    &
    \brackets{\sigma_2^{(\text{iv})}}^2 
    + 12 \sigma_2^{\prime} \sigma_2^{\prime\prime} \sigma_2^{(\text{iv})}
    - 4 t_1 \sigma_2^{\prime\prime} \sigma_2^{(\text{iv})}
    - \sigma_2^{(\text{iv})}
    + 4 \sigma_2^{\prime} \sigma_2^{\prime\prime\prime}
    - 2 t_1 \brackets{\sigma_2^{\prime\prime\prime}}^2
    \\
    & \quad
    - 2 \brackets{\sigma_2^{\prime\prime}}^2 \sigma_2^{\prime\prime\prime}
    + 20 \brackets{\sigma_2^{\prime}}^3 \sigma_2^{\prime\prime\prime}
    - 24 t_1 \brackets{\sigma_2^{\prime}}^2 \sigma_2^{\prime\prime\prime}
    + 9 t_1^2 \sigma_2^{\prime} \sigma_2^{\prime\prime\prime}
    - 2 z \sigma_2^{\prime} \sigma_2^{\prime\prime\prime}
    \\
    & \quad \quad
    - t_1^3 \sigma_2^{\prime\prime\prime}
    + 2 z t_1 \sigma_2^{\prime\prime\prime}
    - 2 \sigma_2 \sigma_2^{\prime\prime\prime}
    + 30 \brackets{\sigma_2^{\prime}}^2 \brackets{\sigma_2^{\prime\prime}}^2 
    - 20 t_1 \sigma_2^{\prime} \brackets{\sigma_2^{\prime\prime}}^2 
    + \dfrac{7}{2} t_1^2 \brackets{\sigma_2^{\prime\prime}}^2
    \\
    & \quad \quad \quad
    + z \brackets{\sigma_2^{\prime\prime}}^2
    - 6 \sigma_2^{\prime} \sigma_2^{\prime\prime}
    + 2 t_1 \sigma_2^{\prime\prime}
    + 24 \brackets{\sigma_2^{\prime}}^5
    - 46 t_1 \brackets{\sigma_2^{\prime}}^4
    + 34 t_1^2 \brackets{\sigma_2^{\prime}}^3
    - 4 z \brackets{\sigma_2^{\prime}}^3
    \\
    & \quad \quad \quad \quad
    - 12 t_1^3 \brackets{\sigma_2^{\prime}}^2 
    + 8 z t_1 \brackets{\sigma_2^{\prime}}^2 
    - 6 \sigma_2 \brackets{\sigma_2^{\prime}}^2
    + 2 t_1^4 \sigma_2^{\prime}
    - 4 z t_1^2 \sigma_2^{\prime}
    + 4 t_1 \sigma_2 \sigma_2^{\prime}
    \\
    & \quad \quad \quad \quad \quad
    - \dfrac{1}{8} t_1^5 
    + \dfrac{1}{2} t_1^3 z
    - \dfrac{1}{2} t_1^2 \sigma_2
    - \dfrac{1}{2} t_1 z^2
    + z \sigma_2
    = \alpha_2 \brackets{\alpha_2 - 1}.
\end{align*}
\end{exmp}

\section{\Backlund transformations}

The \Backlund transformations of \eqref{PIIhier} have two generators  \cite{okam3} 
\begin{align*}
    {s}:
    &
    \qquad 
     \Tilde{w}_n \brackets{z, {t}; \tilde{\alpha}_n}
        = w_n \brackets{z, {t}; \alpha_n} 
        - \dfrac{2 \alpha_n - 1}{2 \dsum_{l = 0}^{n} t_l \mathcal{L}_l \LieBrackets{w_n^{\prime} - w_n^2}},
    \quad
    \tilde{\alpha}_n
    = 1 - \alpha_n
    \\
    {r}:
    &
    \qquad 
     {w_n} \brackets{z, {t}; -{\alpha}_n}
        = - w_n\brackets{z, {t}; {\alpha}_n}.
\end{align*}

\begin{thm}
The \Backlund transformations of the sigma forms \eqref{SF1}, \eqref{SF1n1} act on the sigma function 
\begin{align*}
    {s} \brackets{\sigma_n}
    &= \sigma_n,
    &
    {r} \brackets{\sigma_n}
    &= \sigma_n-2 w_n.
\end{align*}
\end{thm}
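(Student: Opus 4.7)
My strategy is to reduce both claims to Lemma \ref{sigmaWcorrel}. Since $s(\sigma_n)$ and $r(\sigma_n)$ are, by definition, the Hamiltonian \eqref{HIIn} evaluated on the transformed canonical coordinates, applying Lemma \ref{sigmaWcorrel} to the transformed solution yields
$$
\tilde\sigma_n' - \frac{t_{n-1}}{2} = \tilde w_n' - \tilde w_n^2.
$$
Since $t_{n-1}$ is unaffected by either Bäcklund transformation, it will suffice to compute $\tilde w_n' - \tilde w_n^2$ explicitly for each transformation and compare with the derivative of the proposed transformed $\sigma_n$.

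For $s$, I would first use $t_0 = -z$, $\mathcal L_0 = 1/2$ together with the identification $\mathcal L_l[w_n' - w_n^2] = \widehat{\mathcal L}_l[\sigma_n' - t_{n-1}/2]$ established earlier in the paper, to rewrite the denominator in the definition of $s$ as $2\sum_{l=0}^n t_l \mathcal L_l[w_n' - w_n^2] = 2f_n - z$. Thus $s(w_n) = w_n + A$ with $A := (2\alpha_n - 1)/(z - 2f_n)$. Combining with \eqref{wExpr}, which gives $w_n(z - 2f_n) = f_n' - \alpha_n$, I would verify by direct computation that $2w_n + A = (2f_n' - 1)/(z - 2f_n)$ and $A' = (2\alpha_n - 1)(2f_n' - 1)/(z - 2f_n)^2$, whence $A' = A(2w_n + A)$. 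This is precisely the condition for $s(w_n)' - s(w_n)^2 = w_n' - w_n^2$, so by Lemma \ref{sigmaWcorrel} we conclude $s(\sigma_n)' = \sigma_n'$, i.e., $s(\sigma_n) = \sigma_n + c_s$ for some constant $c_s$.

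For $r$ the computation is essentially immediate: $r(w_n) = -w_n$ gives $r(w_n)' - r(w_n)^2 = -w_n' - w_n^2$, while applying Lemma \ref{sigmaWcorrel} to the original system yields $(\sigma_n - 2w_n)' - t_{n-1}/2 = (w_n' - w_n^2) - 2w_n' = -w_n' - w_n^2$ as well. Therefore $r(\sigma_n)' = (\sigma_n - 2w_n)'$, so $r(\sigma_n) = \sigma_n - 2w_n + c_r$ for some constant $c_r$.

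To pin down the constants $c_s, c_r$, I observe that $f_n$, $f_1$ and the $\widehat{\mathcal L}_k$ depend only on $\sigma_n'$, so shifting $\sigma_n$ by a constant $c$ contributes only $(z - 2f_n)c$ to the left-hand side of the sigma form \eqref{SF1}. The transformed sigma function must satisfy \eqref{SF1} with right-hand side $\tilde\alpha_n(\tilde\alpha_n - 1) = \alpha_n(\alpha_n - 1)$ in the $s$-case and $\hat\alpha_n(\hat\alpha_n - 1) = \alpha_n(\alpha_n + 1)$ in the $r$-case; forcing the contribution of the shift to vanish identically in $z$ gives $c_s = c_r = 0$. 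The main technical step is the algebraic identity $A' = A(2w_n + A)$ in the $s$-case, which relies cleanly on \eqref{wExpr}; the $r$-case then follows transparently, and the $n = 1$ case is handled identically using \eqref{SF1n1} in place of \eqref{SF1}.
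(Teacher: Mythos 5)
Your proposal is correct, and for the $s$-generator it takes a genuinely different --- and more robust --- route than the paper. The paper disposes of $s$ in one sentence: since $\tilde\alpha_n(\tilde\alpha_n-1)=\alpha_n(\alpha_n-1)$, the right-hand side of \eqref{SF1} is unchanged, hence $\sigma_n$ is invariant; this tacitly assumes that the sigma form determines its solution, which is not literally true. Your direct verification closes that gap: using $t_0\mathcal L_0=-z/2$ you correctly identify the denominator as $2f_n-z$, so $s(w_n)=w_n+A$ with $A=(2\alpha_n-1)/(z-2f_n)$, and the identity $A'=A(2w_n+A)$ (which does follow from \eqref{wExpr} exactly as you compute, since $2w_n+A=(2f_n'-1)/(z-2f_n)$) gives $s(w_n)'-s(w_n)^2=w_n'-w_n^2$ and hence $s(\sigma_n)'=\sigma_n'$ via Lemma \ref{sigmaWcorrel}. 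For $r$ your computation coincides with the paper's: $-w_n'-w_n^2=(w_n'-w_n^2)-2w_n'$ followed by integration. The one soft spot, which you share with (and handle better than) the paper, is the additive integration constant: the paper simply drops it, having declared earlier that constant shifts of $\sigma$ do not change the dynamics, whereas you fix it by demanding that the transformed function satisfy the sigma form with the transformed parameter. For $s$ this is clean, since a shift by $c$ contributes $(z-2f_n)c$ and $z-2f_n\neq 0$ by hypothesis; for $r$ it is slightly more delicate because $\tilde f_n$ is built from $\tilde\sigma_n'$ rather than $\sigma_n'$, but the conclusion stands. Net effect: your argument is a strictly more detailed version of the paper's, and genuinely strengthens the $s$-case.
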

\begin{proof}
Since the right hand side of the sigma form \eqref{SF1}, \eqref{SF1n1} is invariant under the $s$-action, $\sigma_n$ is also invariant under this action. 
    Regarding the $r$-action, we have
    \begin{equation*}
        {r} \brackets{w_n^{\prime} - w^2}
        = - w_n^{\prime} - w_n^2
        = \sigma_n^{\prime} - 2 w_n^{\prime}.
    \end{equation*}
    After integration w.r.t. $z$ we obtain the final formula.

\end{proof}

\bibliographystyle{plain}
\bibliography{bibliography}
\nocite{*}

\end{document}